\newtheorem{prop}{Proposition}
\newcommand{\ATE}{\text{ATE}}
\newcommand{\TE}{\text{TE}}
\newcommand{\E}{\text{E}}
\begin{document}

\title{%
  Specification analysis for technology use and teenager well-being:
  statistical validity and a Bayesian proposal
}

\author{Christoph Semken$^{1,2}$ and David Rossell$^{1,2}$
\\$^1$: Universitat Pompeu Fabra, Barcelona, Spain
\\$^2$: Barcelona School of Economics, Barcelona, Spain}


\maketitle

\begin{abstract}
  A key issue in science is assessing robustness to data analysis choices, while avoiding selective reporting and providing valid inference. Specification Curve Analysis is a tool intended to prevent selective reporting. Alas, when used for inference it can create severe biases and false positives, due to wrongly adjusting for covariates, and mask important treatment effect heterogeneity. As our motivating application, it led an influential study to conclude there is no relevant association between technology use and teenager mental well-being. We discuss these issues and propose a strategy for valid inference. Bayesian Specification Curve Analysis (BSCA) uses Bayesian Model Averaging to incorporate covariates and heterogeneous effects across treatments, outcomes and sub-populations.
  BSCA  gives significantly different insights into teenager well-being, revealing that the association with technology differs by device, gender and who assesses well-being (teenagers or their parents). 
  
  \keywords{Bayesian model averaging, treatment effect inference, selective reporting, social media, adolescents, mental health}
\end{abstract}

\section{Introduction}

Choosing an appropriate statistical model is an old and important question 
that received renewed attention in light of the reproducibility crisis that plagues many research fields \citep{begley_drug_2012,open_science_collaboration_estimating_2015,baker_is_2016,camerer_evaluating_2018}. ``P-hacking'' and selective reporting – the practice of picking the model or analysis that maximizes a hypothesized effect is often seen as a main driver \citep{benjamin_redefine_2018}. A potential solution to selective reporting is to obtain results under many models, covering many analytical choices 
\citep{athey_measure_2015,young_model_2017}, so that one can assess the sensitivity of results to the chosen statistical model.
The question is then how to either select what models to report or aggregate them into a main finding.

One methodology that aims to address this question is Specification Curve Analysis (SCA, \cite{simonsohn_specification_2020}). 
SCA considers the situation where one wishes to study the association between several outcomes and several treatments of interest (which we refer to as {\it treatment effects}), considering potential covariates (which we refer to as {\it controls}) to adjust the analysis for, and also considering several possible sub-populations of individuals that one could focus the analysis on.
Here, selective reporting could occur if a researcher were to only report results for a particular outcome, treatment or sub-population that supports a pre-conceived finding, failing to indicate that other treatments, outcomes or sub-populations do not support the finding.
SCA attempts to prevent selective reporting by presenting the estimated treatment effects under each possible model, i.e. obtained by regressing each possible outcome on each possible treatment, using all possible control covariate combinations and all sub-populations of individuals.
SCA plots these results in a single display that can serve as a descriptive sensitivity analysis.
For example, Figure \ref{fig:sca} shows the estimated associations between two technologies (TV and electronic device use) and 5 outcomes (loneliness and four related to suicide), both using and not using control covariates. The plot reveals that not all models lead to statistically significant associations, and that their magnitude varies quite a bit.
A nice feature of SCA is urging caution in such situations: if one were to report only one of these estimates, then one would have to carefully justify that choice.
However, as a problematic issue discussed here, SCA also performs a formal hypothesis tests on the median of all these effects (or their sign). As argued below, such a test is a poor strategy to aggregate results that can lead to statistically invalid conclusions.

SCA has been used in many fields, including psychology \citep{rohrer_probing_2017,bryan_replicator_2019-1,hassler_large-scale_2020}, political science \citep{dunning_information_2019}, economics \citep{cookson_when_2018,lejarraga_no_2019} and neuroscience \citep{cosme_multivariate_2020}.
It has also been covered by a number of methodological guides and reviews (e.g., \citealt{christensen_transparency_2018-1,forstmeier_detecting_2017,george_big_2016,milfont_replication_2018,orben_teenagers_2020,simmons_false-positive_2018,weston_recommendations_2019,wuttke_why_2019}).
It is therefore important to find a way to ameliorate certain pitfalls in SCA, while maintaining its value as a tool to explore sensitivity to analysis choices.

A main motivating application behind our developments is an influential study by \cite{orben_association_2019} on the association between technology use and teenager well-being.
SCA led these authors to conclude that ``the association of [adolescent mental] well-being with regularly eating potatoes was nearly as negative as the association with technology use''. That is, said association does not have a practical significance.
These findings, which were portrayed in outlets such as \citet{the_new_york_times_is_2019} and \citet{forbes_magazine_screen_2019}, have important consequences for the public's perception and decision makers.
Unfortunately, such conclusions are due to using inadequate statistical methodology: SCA fails to account for control covariates and combines heterogeneous effects across treatments and outcomes into an overall summary that can be severely misleading.
Here we develop an extension of SCA that addresses these issues and, when applied to the data of \cite{orben_association_2019}, gives very different conclusions.
There exist associations between certain technologies and well-being of high practical relevance, and there are stark differences between these associations as assessed by parents versus teenagers.
In view of increased demands that social media platforms should be accountable for the well-being of its users, particularly potentially vulnerable members such as teenagers, it is critical that the debate on such technologies is informed by sound statistical methodology.
We remark, however, that our analyses do not imply causal relationships between technology and well-being, rather they inform about their conditional association after one accounts for several controls. For example, it may be that sadness leads one to use more technology, rather than the other way around.

We elaborate on the two mentioned pitfalls of SCA.
First, it fails to properly account for controls. Rather than using the controls that are likely to have an effect on the outcome, based on the observed data,
the SCA median weights all possible control configurations equally. For example, suppose that there is a single control and there is strong evidence that it is associated with the outcome, so that it needs to be included in the regression to avoid an omitted variable bias. SCA would obtain the median of the estimated effect when including the control and when excluding it, and would hence run into the omitted variable bias, which as we illustrate below can lead to a 100\% false positive rate even in simple situations.
Second, by reporting median effects over different treatment-outcome combinations and across multiple sub-populations, SCA can mask critical heterogeneity of the treatment effects. If, for example, a treatment has opposite effects on two different outcomes their median can be essentially zero, e.g. as is the case for certain parent versus teenager assessments of well-being.
In layman's terms, an average treatment effect (ATE) may appear practically irrelevant due to averaging over apples and oranges.
By conducting a hypothesis test on a single aggregate estimate, SCA cannot detect such heterogeneous effects.

In this paper, we propose an alternative aggregation method to address these issues – the Bayesian Specification Curve Analysis (BSCA).
BSCA uses Bayesian Model Selection and Averaging (BMS and BMA respectively, see for example \cite{hoeting_bma_1999,clyde_introduction_2020}) to produce separate estimates for each treatment-outcome combination  and hence acknowledge their heterogeneity.
By using a convenient parameterization it also allows to consider sub-populations where their effects may deviate from the average treatment effect, i.e. interactions between the treatments and the sub-population indicator.
To avoid controversies in setting prior parameters, we use an approximation to BMA given by the Extended Bayesian Information Criterion (EBIC; \citealt{chen_extended_2008}). The EBIC sets stringent thresholds for including parameters in the model, hence helping prevent false positives and selective reporting.
We illustrate how our approach, which is a relatively direct extension of well-established methodology, leads to vastly improved statistical properties, including lower bias and type I error rates.

Our aim is not to provide a detailed theoretical critique of SCA and similar aggregation methods, which can be found elsewhere \citep{slez_difference_2019,giudice_travelers_nodate}. Instead, we seek to provide a practical alternative that relies on statistically principled methods and can address the selective reporting issues that motivated SCA (considering multiple models defined by choosing treatments, controls, outcomes and sub-populations). 
For completeness, the Supplementary Materials include an introduction to the Bayesian framework and a BSCA tutorial for practitioners.
Our secondary goal is to provide cues that can help inform the debate on teenager technology use in a statistically-principled manner. While space does not allow us to fully discuss the extensive related behavioral literature, we do relate some of our findings to said literature in Section \ref{sec:discussion}, particularly where it helps emphasize that the methodological issues discussed here can have important practical consequences.

The paper is structured as follows. Section \ref{sec:methods} formalizes the data analysis problem, reviews SCA and describes our BSCA proposal, along a parameterization that allows to naturally incorporate subgroup analysis.
We first describe a single-outcome BSCA to study the effects of multiple treatments on one given outcome, both on average and within sub-populations.
We next describe a multiple-outcome BSCA which allows one to visualize the effect of multiple treatments on multiple outcomes, as well as global average treatment effects across outcomes.
Finally, we discuss how one may present association measures other than regression coefficients, if so desired.
Section \ref{sec:simulation} illustrates the poor statistical properties of SCA in simple simulations, and the improvements brought by BSCA. It also illustrates how our EBIC-based formulation helps prevent false positives, even when conducting multiple hypothesis tests.
Section \ref{sec:application} introduces the teenager data and discusses how the BSCA findings contribute to the debate on technology use and teenager well-being. 
Section \ref{sec:discussion} concludes.
The appendix contains the proof of a simple proposition, and we provide further results and R code as supplementary material.

\begin{figure}
  \includegraphics[width=\linewidth, height=0.6\linewidth]{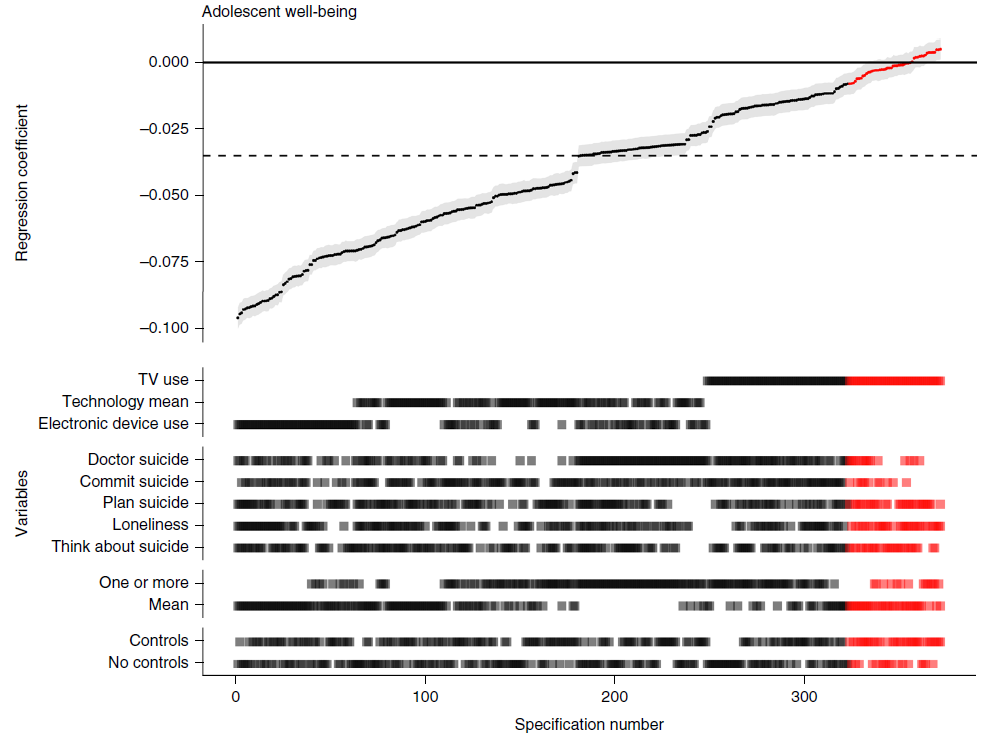}
  \caption{Specification curve analysis, reproduced from \citet{orben_association_2019}. Top: estimated treatment effect and 95\% confidence interval of technology use on teenager well-being, obtained from a linear regression model. Dotted line: median estimate. Red markers: statistically non-significant results (P-value $>$0.05). Bottom: independent and dependent variable(s) used. ``One or more'' indicates whether the dependent variable was defined to be 1 when $\geq 1$ of the 5 outcome variables was 1, or the mean of said 5 outcomes. ``Controls'' indicates whether race was included as a control variable or not. Data source: YRBS. See Section S4 for details and possible issues.}
  \label{fig:sca}
\end{figure}

\section{Methods} \label{sec:methods}

Consider a setting where there are $L$ outcomes $y^{(l)} \in \mathbb{R}^n$ for $l=1,\ldots,L$, with $n$ observations for each, that one may use to study a phenomenon (e.g. several outcomes measuring teenager well-being).
The goal is to learn the association between the outcomes and a set of $J$ treatments, recorded in an $n \times J$ matrix $X$ (e.g. social media use, internet use, TV use).
One wishes to adjust for $Q$ controls recorded in an $n \times Q$ matrix $Z$ (e.g. household or socio-economic characteristics),
and also consider that said association may differ across $K$ sub-populations (e.g. defined by gender or age).
The subgroup information is coded in an $n \times K$ matrix $G$, using a parameterization specified in Section \ref{sec:methods_model}.
The reason for considering sub-populations is that one suspects that treatment effects could differ across sub-populations, i.e. there are heterogeneous treatment effects, which can be statistically formalized as there being an interaction between the treatment and the sub-population. If said interaction is not present, then one would rather remove it from the model and focus on the average effect for the whole population.
We denote by $x_i \in \mathbb{R}^J$ the $i^{th}$ row in $X$, $z_i \in \mathbb{R}^Q$ that in $Z$, and $g_i \in \mathbb{R}^K$ that in $G$.

We outline a generic generalized linear model to study the treatment effects,
for simplicity focusing first on the setting with a single outcome $y=(y_1,\ldots,y_n)$.
The regression equation is
\begin{equation} \label{eq:problem_sca}
 F \left( E(y_i \mid x_{i}, g_{i}, z_{i}) \right) = \alpha + \gamma^T z_{i} + \eta^T g_{i} + \sum_{j=1}^J ( \beta_{j} + \delta_{j}^T g_{i}) x_{ij}
\end{equation}
where $\alpha \in \mathbb{R}$ is the intercept, $\gamma \in \mathbb{R}^Q$ are regression coefficients for the controls,
$\eta \in \mathbb{R}^K$ the main effects for the subgroups,
$\beta=(\beta_1,\ldots,\beta_J)$ are treatment effects and $\delta_j \in \mathbb{R}^K$ their modifiers for the sub-populations defined by $g_i$.
Importantly, our parameterization in Section \ref{sec:methods_model} ensures that $\beta_j$ can be interpreted as the average treatment effect across subpopulations.
The total number of regression parameters is $p=1+Q+K+J(1+K)$.
$F$ is a link function specifying the functional form linking the outcome and covariates, e.g. the logistic function in logistic regression.
For simplicity we focus the exposition on Gaussian linear regression where $F$ is the identity, but our methodology is also implemented for other generalized linear models, and our teenager application includes logistic regression examples.

The issue is that it is not clear a priori whether all terms in (\ref{eq:problem_sca}) are really necessary, i.e. what controls should one adjust for, is it justified to consider sub-populations, or should one only consider a subset of treatments.
Section \ref{ssec:sca} reviews how SCA addresses this issue and discusses the main features in our BSCA proposal.
Section \ref{sec:methods_model} presents our parameterization for the treatment effects in (\ref{eq:problem_sca}).
Section \ref{sec:methods_bma} discusses how BSCA is built on Bayesian model selection and averaging under an approximation given by the EBIC.
Section \ref{sec:methods_single} describes how to perform BSCA for a single outcome, and Section \ref{sec:methods_multiple} how to summarize the results across multiple outcomes in a single display.
Finally, Section \ref{ssec:alternative_measures} discusses how one may perform BSCA inference for measures of association other than regression coefficients.

\subsection{Specification curve analysis}
\label{ssec:sca}

SCA requires the researcher to identify a ``set of theoretically justified, statistically valid and non-redundant'' analysis strategies, called specifications.
Each of these defines a sub-model of (\ref{eq:problem_sca}), which can vary in terms of what outcome, treatments and controls are considered,
and for what subgroups (if any) one wishes to obtain separate treatment effect estimates for.
Denote by $s=1, \ldots, S$ the possible sub-models, by $(x_{si}, g_{si}, z_{si})$ the corresponding subset of covariates $(x_i,g_i,z_i)$ selected by model $s$,
and by $\hat{\beta}_{sj} + \hat{\delta}_{sj}^T g_i$ the estimated treatment effects.

SCA estimates the effect of each treatment featuring in each considered model $s$ (or a random sample of models, if there are too many to enumerate fully),
and plots all these estimated $\hat{\beta}_{sj} + \hat{\delta}_{sj}^T g_i$ into a ``descriptive specification curve''.
The idea is that one may easily visualize how the estimates vary according to what specific outcome, treatment, controls or individual subgroups were used.
Figure \ref{fig:sca} shows an example from \citet{orben_association_2019}. The top panel shows the estimated coefficients (sorted increasingly) and the bottom panel which 
treatment was used (electronic device use, TV use, or their average), 
what outcomes (either loneliness or four related to suicide), how the multiple outcomes were aggregated into a scalar response, and whether a vector of control variables was included or not.
The red color is used to highlight what coefficients received a P-value above 0.05 in their individual models.

Besides this descriptive use, \cite{simonsohn_specification_2020} also proposed to obtain a global effect estimate and to test its statistical significance via a hypothesis test based on a type of permutation and paired bootstrap procedures. 
They propose three such global estimates:
(i) the median of all estimated effects, (ii) the ``share of specifications
that obtain a statistically significant effect in the predicted direction'' or (iii) the average Z-test statistic value.
Most SCA applications that we are aware of, including \citet{orben_association_2019} use the median effect. We hence focus our discussion on this measure, but similar considerations apply to the other two.

As discussed earlier, there are two main issues that render the median treatment effect $\hat{\beta}_{sj} + \hat{\delta}_{sj}^T g_i$ across all models $s$ statistically invalid, in the sense of not estimating the true (conditional) outcome-treatment association consistently even as $n \rightarrow \infty$.
First, one assigns the same weight to all control configurations. For example, with a single control one would take the median of the estimated effect when including and when excluding the control, regardless of the statistical (nor practical) significance of the control's estimated coefficient.
It is well-known that failing to adjust for controls can seriously bias the estimates, inflate their variance, and result in serious issues for the type I error of the test.
See Section \ref{sec:simulation} for a simple example where the SCA median test has a type I error of 1.
Second, computing the median (or average) effect over all treatments and outcomes can mask important heterogeneity.
For example, in the teenager data averaging over electronic device and TV use, or across teenager and parent assessments, led \citet{orben_association_2019} to conclude that the association between technology and teenager well-being is not practically relevant.

Our proposal is based on using BMS and BMA to infer treatment effects in (\ref{eq:problem_sca}), while incorporating the uncertainty regarding what controls are most appropriate and the potential existence of heterogeneous treatment effects across subpopulations.
This is similar in spirit to SCA, which also averages over control configurations and subsets of individuals. 
The critical difference is that BMA weights each control configuration based on its posterior probability given the observed data, and also uses posterior probabilities to evaluate heterogeneity across subpopulations.
In contrast, SCA aggregates models using equal weights that are specified a priori, and are hence not informed by the data.
Another important difference relative to SCA is that, by default, we do not perform a single test on the average effects over treatments and outcomes, but rather portray their heterogeneity. We remark that, if desired, our framework also allows testing such averages, as discussed in Section \ref{sec:methods_multiple}.

\subsection{Model parameterization} \label{sec:methods_model}

We discuss how to ensure that the parameters $\beta_j$ quantifying treatment effects in (\ref{eq:problem_sca}) have the same interpretation in any of the $S$ considered models, else the reported average may be non-sensical.
Suppose that Model 1 does not include interactions with subgroups ($\delta_j=0$), whereas Model 2 includes them ($\delta_j \neq 0$). Unless both models are suitably parameterized, the interpretation of $\beta_j$ would be different under each model. 
In particular, if one were to use the standard binary indicators to code for treatments and subgroups in Model 2, then $\beta_j$ would be the treatment effect for the reference subgroup, whereas for Model 1 $\beta_j$ is the average treatment effect across all individuals.

Fortunately, there is a simple parameterization such that $\beta_j$ gives the average treatment effect across the $n$ individuals, regardless of whether the model includes interactions with subgroups or not.
The solution is to code the treatments as $x_{ij} \in \{-1/2,1/2\}$, where $x_{ij}=1/2$ indicates that individual $i$ received treatment $j$ and $x_{ij}=-1/2$ otherwise. Regarding the subgroups coded into $g_i \in \mathbb{R}^K$, let $\rho_k$ be the proportion among the $n$ individuals that belong to group $k=1,\ldots,K$, we then code
\begin{align}
  g_{ik} = \begin{cases}
    \rho_k \text{, if the individual belongs to subgroup } k \\
    -(1 - \rho_k) \text{, otherwise}
  \end{cases}. \qquad
\nonumber
\end{align}
Under this parameterization, simple algebra shows that 
$\sum_{i=1}^n g_i=0$.
Hence, the effect of treatment $j$ for individual $i$ is
\begin{equation}\label{eq:te}
  \TE_{ij} = \E(y_i \mid x_{ij} = 1/2, g_i) - \E(y_i \mid x_{ij} = -1/2, g_i)=
  \beta_j + \delta_{j}^T g_{i}
\end{equation}
and the average treatment effect across all individuals is
\begin{align}\label{eq:ate}
\ATE_j= \frac{1}{n} \sum_{i=1}^n \TE_{ij}= 
\beta_j + \frac{1}{n} \delta_j^T \sum_{i=1}^n g_i=
\beta_j.
\end{align}

That is, the parameterization enforces a sum-to-zero constraint, such that $\beta_j$ is the ATE and $\delta_j^T g_i$ gives the deviations from the ATE for each sub-population.
Expression (\ref{eq:ate}) obviously remains valid in models that do not include heterogeneity of effects across subgroups ($\delta_j=0$), and can be easily extended to the case where there are multiple subgroup indicators.
For non-Gaussian generalized linear models where the link function $F$ in (\ref{eq:problem_sca}) is not the identity,
we define the treatment effect in terms of the linear predictor, so one obtains the same expression as in \eqref{eq:te}. For example, in logistic regression the treatment effect in \eqref{eq:te} is defined in terms of the log odds-ratio.

We remark that in this paper we treat both the ATEs given by $\beta_j$ and the sub-group specific parameters in $\delta_j$ as fixed effects, i.e. they are not assumed to arise from a random effects distribution. This is both for simplicity and because in our teenager application the interest is on a few treatments and sub-groups, hence the fixed effects can be estimated accurately and provide a non-parametric alternative over assuming a particular random effects distribution (e.g. Normal). In situations where one wishes to specify random effects, these can in principle be incorporated into our framework by setting a suitable hierarchical prior on $(\beta_j,\delta_j)$, but we omit such discussion as the actual implementation details and computational algorithms would require certain adjustments that might obscure the exposition.

\subsection{Bayesian model selection and averaging} \label{sec:methods_bma}

BSCA first uses Bayesian model selection to assign a score (posterior probability) to each model. 
Recall that each model $s$ defines a different set of non-zero entries in the parameters $(\alpha, \gamma, \eta, \beta, \delta)$ in (\ref{eq:problem_sca}), hence defining a specific configuration of controls to be used, treatments to be included and (potentially) whether heterogeneous treatment effects across sub-populations are needed or not.
Then, one aggregates results across models using Bayesian model averaging.
A short introduction to the Bayesian regression framework is provided in the Supplementary Material.
See also \citet{madigan_bms_1994} for a description of BMS and \citet{hoeting_bma_1999} for a tutorial on BMA.

The posterior probability of each model $s$ given by BMS is obtained from Bayes’ rule as
$$ p(s \mid y) = \frac{p(y \mid s)p(s)}{p(y)}= \frac{p(y \mid s) p(s)}{\sum_{s'=1}^S p(y \mid s') p(s')},$$
where $p(s)$ is a user-specified prior model probability and $p(y \mid s)$ is the so-called integrated likelihood (or marginal likelihood, or evidence) for model $s$. 
It can be computed using standard methods, either via closed-form expressions (when available), 
Laplace approximations \citep{kass_1990}, 
approximate Laplace approximations \citep{rossell_2021approximate} when $n$ or $p$ are large, or Markov Chain Monte Carlo (MCMC) methods \citep{friel_2012}.

To avoid contentious prior choices and simplify the calculation of $p(y \mid s)$, in BSCA we use the approximation
$p(y \mid s) p(s) \approx e^{-\frac{1}{2} \text{EBIC}_s}$,
where $\text{EBIC}_s$ is the extended Bayesian information criterion for model $s$ \citep{chen_extended_2008}.
Briefly, the EBIC selects the same model as computing $p(s \mid y)$ exactly under a so-called unit information prior on the regression parameters and a Beta-Binomial prior on the models, under fairly general conditions as the sample size $n$ grows (\citealt{schwarz_estimating_1978}; \citealt{rossell_concentration_nodate}, Section 3). 
This gives approximate posterior probabilities
$$ p(s \mid y) \approx \frac{e^{-\frac{1}{2} \text{EBIC}_s}}{\sum_{s'=1}^{S} e^{-\frac{1}{2} \text{EBIC}_{s'}}}.$$

BMA then uses these posterior probabilities to obtain weighted estimates that
formally acknowledge the uncertainty in what is the right set of control variables, and the potential existence of treatment effect heterogeneity across subpopulations.
Specifically, BMA expresses the posterior distribution of the parameters as the weighted average
$$
p(\alpha, \beta, \gamma, \delta, \eta \mid y)= \sum_{s=1}^S p(\alpha, \beta, \gamma, \delta, \eta \mid s, y) p(s \mid y),
$$
where $p(\alpha, \beta, \gamma, \delta, \eta \mid s,y)$ is their posterior distribution under model $s$.
When the number of models $S$ is too large to enumerate all models we use Markov Chain Monte Carlo methods, specifically a Gibbs sampling algorithm implemented in the mombf R package used in our examples. 
Point estimates and 95\% intervals for average treatment effects $\beta_j$ in (\ref{eq:ate}) or subgroup-specific treatment effects in (\ref{eq:te}) are obtained as the mean and 95\% interval of their BMA posteriors $p(\beta_j \mid y)$ and $p(\beta_j + \delta_j^T g_i \mid y)$ respectively, also estimated by MCMC.

The BMA estimate and 95\% interval take into account the uncertainty arising from the many possible model specifications.
We emphasize that a main motivation for the original SCA was that standard errors conditional on a single selected model fail to account for the model selection uncertainty and selective reporting \citep{simonsohn_specification_2020}. This issue is resolved in BSCA, as one considers all controls and subpopulations.

Importantly, the EBIC incorporates a type of control for false discoveries under multiple hypothesis testing. Intuitively, it uses a more stringent threshold to include parameters into the selected model as the number of parameters $p$ grows (e.g. one has more treatments, controls or subgroups).
For example, the EBIC threshold to drop a parameter from the full model including all terms in (\ref{eq:problem_sca}) is roughly given by a P-value threshold that is of order $1/np^2$.
For comparison, this threshold is more stringent than the $1/p$ threshold used by a Bonferroni P-value adjustment.
Further, as $n$ grows under mild conditions the EBIC selects the true model $s^*$ that only selects the truly non-zero parameters, even when $p$ far exceeds $n$ \citep{chen_extended_2008}. 
This ensures that family-wise type I and II error probabilities both converge to 0, and in particular that one is protected from selective reporting.
In fact, a stronger property holds: the posterior probability of the true model $p(s^* \mid y)$ given by the EBIC converges to 1 under mild conditions for fixed $p$ \citep{schwarz_estimating_1978,dawid_2011}.
The result also holds, under slightly stronger conditions, in Gaussian regression where $p$ grows faster than $n$
even when data are not exactly generated by the assumed regression model – e.g. due to omitted covariates, non-linear effects, or other incorrect parametric assumptions \citep{rossell_concentration_nodate}.

The main practical implication of these theoretical results is that, when $n$ is large (e.g. our teenager application), the BMA point estimate and 95\% interval for treatment effects converge to the frequentist MLE and 95\% confidence interval obtained under the optimal model (i.e. the true model $s^*$, if there is no model misspecification).
Such a property is critical in applied settings where there is no theoretical framework (e.g. given by psychological or behavioral theory) guiding which controls or subpopulations one should consider.
See the conclusions for a discussion on how to interpret the framework when the assumed model is misspecified. 

\subsection{Single-outcome BSCA} \label{sec:methods_single}

\begin{figure}[t]
  \includegraphics[width=\linewidth]{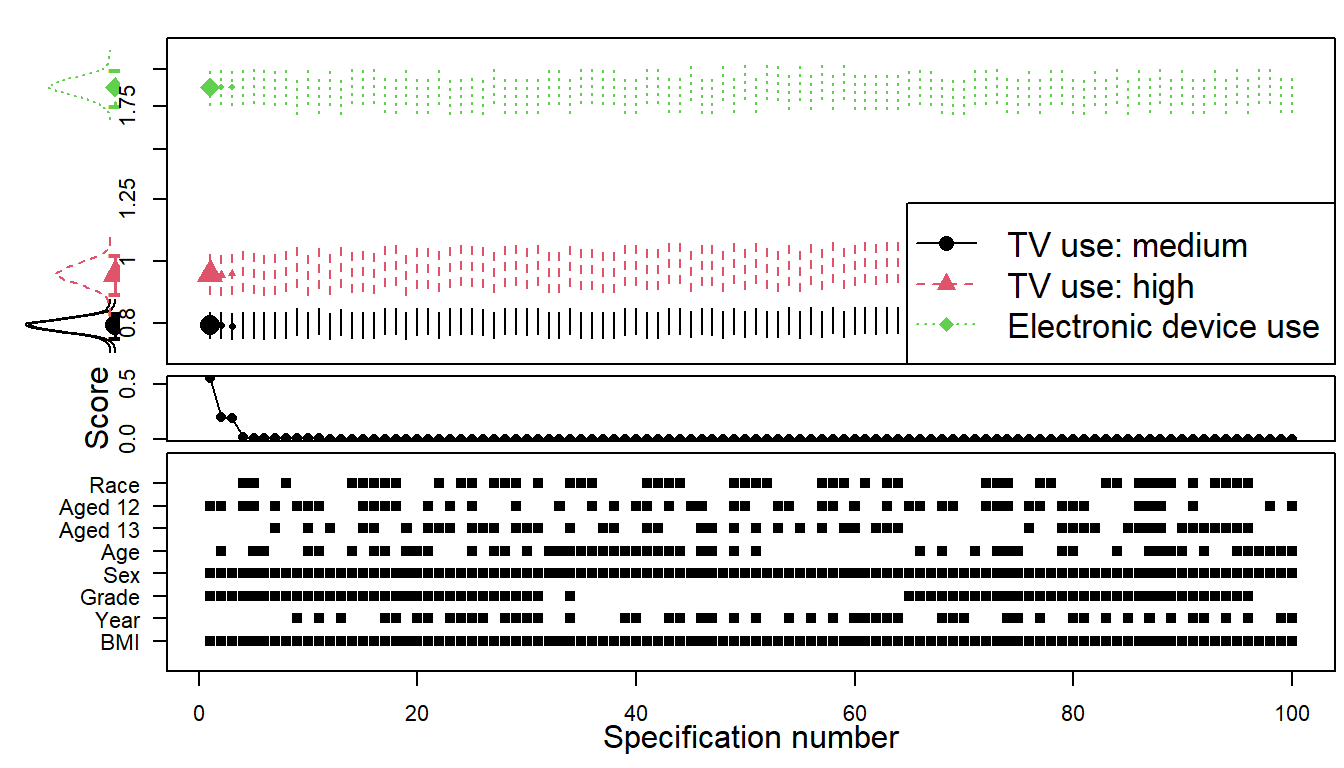}
  \caption{Single-outcome BSCA. Top-right: estimated effect of TV and electronic device use on the probability of thinking about suicide with 95\% interval for different treatments and models, marker size proportional to model score. Middle: EBIC-based model scores. Bottom: included controls. Top-left: Bayesian Model Averaging point estimate, 95\% interval and posterior distribution. Effects are odds-ratios compared to no usage for 1-4 hours (medium) and \textgreater4 hours (high) of TV usage and \textgreater=5 hours of Electronic device use. All 1024 possible confounder combinations were considered. Data source: YRBS.}
  \label{fig:bsca_single_yrbs}
\end{figure}

The main output of BSCA when focusing on a single outcome is a plot portraying treatment effect estimates under each considered model (or those with highest posterior probability, when there are too many models to display), as well as the BMA posterior distribution, the aggregated point estimate and 95\% posterior interval.

Figure \ref{fig:bsca_single_yrbs} shows a single-outcome BSCA using the data described in Section \ref{sec:application}.
The BSCA top right panel gives point estimates and 95\% intervals given by the posterior distribution of the treatment effects $p(\beta \mid s,y)$ for the 100 models with highest posterior probability $p(s \mid y)$. This panel is analogous to standard SCA, except that we focus attention on the models more supported by the data.
The BSCA middle panel gives the model scores, that is the posterior model probabilities $p(s \mid y)$.
The idea is that, by looking at these scores, one gets an assessment of what models are supported by the data.
The bottom panel mimicks that in SCA and indicates the covariates included by each model.
Finally, the BSCA top left panel displays the BMA posterior distribution $p(\beta_j \mid y)$, the posterior mean $E(\beta_j \mid y)$ as a point estimate, and its 95\% posterior interval. 

The BMA posterior distribution is also used to perform a hypothesis test for $\beta_j=0$, specifically when the posterior probability $P(\beta_j \neq 0 \mid y)$ exceeds a threshold one rejects $\beta_j=0$. For simplicity, we recommend the threshold $P(\beta_j \neq 0 \mid y)>0.95$.
Moreover, our EBIC-based formulation guarantees that, if truly an effect $\beta_j=0$ is not present, then the expectation of $P(\beta_j \neq 0 \mid y)$ and the type I error rate converge to 0, as $n$ grows (\citealt{rossell_concentration_nodate}, Proposition 1 and Corollaries 1-2).
The simulation study in Section \ref{sec:simulation} illustrates these properties

Note that the top panel in Figure \ref{fig:bsca_single_yrbs} shows the ATE in (\ref{eq:ate}), i.e. the effect of each treatment on a given outcome, averaged across subpopulations if these were included in the models.
If subgroup-specific treatment effects in (\ref{eq:te}) are desired, they can either be shown in the top panel as additional curves (in our R function this can be achieved with the $coefidx$ parameter) or in a separate figure.

\subsection{Multiple-outcome BSCA} \label{sec:methods_multiple}

When considering multiple outcomes, e.g. several assessments related to teenager well-being, it is desirable to summarize all the treatment-outcome pairs in a single plot, so that one can easily assess their heterogeneity.
This can be easily achieved by reporting the BMA summaries (top left panel in Figure \ref{fig:bsca_single_yrbs}) in a single plot.
As an illustration, Figure \ref{fig:bsca_multiple} shows an example with 5 treatments and 8 outcomes.
We call such a plot the multiple-outcome BSCA.

Besides assessing the significance of each treatment-outcome pair separately, 
there are situations where researchers may want to assess the global average treatment effect (GATE) across multiple treatments and/or outcomes.
This may be particularly suitable when the treatments correspond to the use of similar technologies (e.g. two social media platforms), or when the outcomes are expected to measure similar underlying characteristics (e.g. feeling depressed in two consecutive months).
We next discuss how to obtain such a GATE.

Suppose that one has $L \geq 1$ outcomes and $J \geq 1$ treatments. Let $\beta_{jl}$ be the regression coefficient associated to treatment $j \in \{1,\ldots,J\}$ and outcome $l= \{1,\ldots,L\}$.
One may then define the global effect of treatment $j$ across all outcome as
\begin{align}
 \mbox{GATE}_j= \frac{1}{L} \sum_{l=1}^L \beta_{jl}
\nonumber
\end{align}
and the global effect across all treatments and outcomes as
\begin{align}
\mbox{GATE}= \frac{1}{JL} \sum_{l=1}^L \sum_{j=1}^J \beta_{jl}.
\label{eq:gate}
\end{align}
BMA point estimates for the $\mbox{GATE}_j$ and $\mbox{GATE}$ above are given by the posterior means
\begin{align}
E(\mbox{GATE}_j \mid y) &= \frac{1}{L} \sum_{l=1}^L E( \beta_{jl} \mid y),
\nonumber \\
E(\mbox{GATE} \mid y) &= \frac{1}{JL} \sum_{l=1}^L \sum_{j=1}^J E( \beta_{jl} \mid y),
\nonumber
\end{align}
and one may similarly obtain a 95\% posterior interval from $p(\mbox{GATE}_j \mid y)$ and $p(\mbox{GATE} \mid y)$.

As a practical concern, such posterior intervals in principle require one to formulate a multivariate regression model for the $L$ outcomes that captures the dependence between outcomes, which results in posterior dependence for the $\beta_{jl}$'s. 
Although feasible, applying BMA to such multivariate models requires a more involved implementation and computation.

Fortunately, for continuous outcomes (e.g. Gaussian regression) a simpler strategy is possible.
It suffices to use a univariate regression (\ref{eq:problem_sca}) where the response is defined as the average across the $L$ outcomes, which we denote by $m_i= \sum_{l=1}^L y_{il}/L$, and then immediately inference for the $\mbox{GATE}_j$ and the GATE in (\ref{eq:gate}).
Proposition \ref{thm:ate} provides a precise statement, and follows from simple algebra.

\begin{prop}\label{thm:ate}
  Let $y_i$ be an $L$-dimensional response, $x_i$ a $J$-dimensional treatment, $z_i$ a $Q$-dimensional control covariate vector and $g_i$ a $K$-dimensional subpopulation-membership vector for individuals $i=1,\ldots,n$. 
  Consider the regression
  \begin{equation}\label{eq:model_linear}
    y_i = \alpha + \gamma z_{i} + \eta g_{i} + \sum_{j=1}^J \left( \beta_{j} x_{ij} + \delta_{j} x_{ij} g_{i} \right) + \epsilon_i,
  \end{equation}
  where $\epsilon_i \in \mathbb{R}^L$ is a zero-mean error vector,
  $\alpha \in \mathbb{R}^L$ the intercept, 
  $\gamma$ an $L \times Q$ matrix with the control regression coefficients,
  $\eta$ an $L \times K$ matrix with the subgroup main effects,
  $\beta = (\beta_1, \ldots, \beta_J)^T$ an $L \times J$ matrix containing the treatment effects, 
  and $\delta_j$ an $L \times K$ matrix with their modifiers for the sub-populations.
  
  Let $m_i= \sum_{l=1}^L y_{il}/L$ be the mean of all outcomes in $y_i$. Then
  $$
  m_i= \tilde{\alpha} + \tilde{\gamma}^T z_{i} + \tilde{\eta}^T g_{i} + \sum_{j=1}^J \left( \tilde{\beta}_{j} x_{ij} + \tilde{\delta}_{j}^T x_{ij} g_{i} \right) + \xi_i,
  $$
  where $\xi_i= \sum_{l=1}^L \epsilon_{il}$, $\tilde{\alpha}= \sum_{l=1}^L \alpha_l/L$,
  $\tilde{\gamma}$ is a $Q \times 1$ vector with $q^{th}$ entry equal to $\sum_{l=1}^L \gamma_{lq}/L$,
 $\tilde{\eta}$ a $K \times 1$ vector with $k^{th}$ entry equal to $\sum_{l=1}^L \eta_{lk}/L$,
  $\tilde{\delta}_j$ is a $K \times 1$ vector with $k^{th}$ entry equal to $\sum_{l=1}^L \delta_{jlk}/L$,
 and
  $$\tilde{\beta}_j= \frac{1}{L} \sum_{l=1}^L \beta_{jl}= \mbox{GATE}_j.$$

Further,
  $$
  \mbox{GATE}= \sum_{j=1}^J \frac{\tilde{\beta}_j}{J}.
  $$
\end{prop}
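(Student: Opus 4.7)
The plan is to exploit the linearity of the vector regression in (\ref{eq:model_linear}): averaging the $L$ scalar component equations coordinatewise produces an equation for $m_i$ whose coefficients are precisely the arithmetic means of the coefficients of the original vector equation, matching the tilde quantities by definition.

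First, I would write out the $l$-th scalar component of (\ref{eq:model_linear}) explicitly,
$$
y_{il} = \alpha_l + \sum_{q=1}^Q \gamma_{lq} z_{iq} + \sum_{k=1}^K \eta_{lk} g_{ik} + \sum_{j=1}^J \beta_{jl} x_{ij} + \sum_{j=1}^J \sum_{k=1}^K \delta_{jlk}\, x_{ij} g_{ik} + \epsilon_{il},
$$
taking care to index the matrix/vector entries consistently with the problem statement. Then I would sum this identity over $l = 1,\ldots, L$ and divide by $L$. Since the covariates $z_i, g_i, x_i$ do not carry an $l$-index, they can be pulled out of the sum, so each coefficient in the averaged equation becomes exactly $\tilde\alpha = L^{-1}\sum_l \alpha_l$, $\tilde\gamma_q = L^{-1}\sum_l \gamma_{lq}$, $\tilde\eta_k = L^{-1}\sum_l \eta_{lk}$, $\tilde\beta_j = L^{-1}\sum_l \beta_{jl}$, and $\tilde\delta_{jk} = L^{-1}\sum_l \delta_{jlk}$, with the residual $\xi_i = L^{-1}\sum_l \epsilon_{il}$ being zero-mean since each $\epsilon_{il}$ is. This reproduces the claimed univariate regression equation for $m_i$ (up to a minor discrepancy with the statement: the excerpt writes $\xi_i = \sum_l \epsilon_{il}$ without the $1/L$, but inspection makes clear the intended normalization is $1/L$).

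The two GATE identities are then immediate from the definitions. By construction, $\tilde\beta_j = L^{-1}\sum_{l=1}^L \beta_{jl}$, which is exactly $\mbox{GATE}_j$ as defined just above Proposition~\ref{thm:ate}. For the overall GATE, substituting this into (\ref{eq:gate}) gives
$$
\mbox{GATE} = \frac{1}{JL} \sum_{l=1}^L \sum_{j=1}^J \beta_{jl} = \frac{1}{J}\sum_{j=1}^J \left(\frac{1}{L}\sum_{l=1}^L \beta_{jl}\right) = \frac{1}{J}\sum_{j=1}^J \tilde\beta_j,
$$
using only that finite double sums can be exchanged.

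Honestly, there is no genuine obstacle here: the proposition is entirely a bookkeeping exercise that repackages the vector regression as a scalar regression for the average outcome, and the paper itself signals as much by calling it ``simple algebra.'' The only thing I would be careful about is being explicit that the $x_i$, $z_i$, and $g_i$ do not depend on $l$, so that all summations over $l$ pass through to the coefficients unchanged; this is what makes the reduction from the multivariate model to a univariate model for $m_i$ exact, rather than an approximation. This exactness is what enables the practical simplification mentioned in the paragraph preceding the proposition, namely that one can run a single univariate BMA on the averaged outcome to obtain posterior inference for $\mbox{GATE}_j$ and $\mbox{GATE}$.
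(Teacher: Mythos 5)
Your proof is correct and follows essentially the same route as the paper's: the paper left-multiplies the vector equation by $\frac{1}{L}{\mathbb 1}_L^T$ whereas you average the $L$ scalar components, which is the identical computation in different notation. Your observation that the statement's $\xi_i=\sum_{l=1}^L \epsilon_{il}$ is missing a factor $1/L$ is also right, as the paper's own proof gives $\xi_i=\sum_{l=1}^L \epsilon_{il}/L$.
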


\begin{proof}
  See Appendix \ref{sec:proofs}.
\end{proof}

An important remark, which makes us caution against using the GATE for statistical inference, is that it assigns equal weight to all outcomes. This may be inappropriate in situations where some of the outcomes are strongly correlated. For instance, suppose that there are $J=10$ outcomes, 9 of which measure a very similar latent quantity (they are similar items within a questionnaire) whereas the tenth outcome measures an inherently different quantity. Then the GATE will be mostly determined by outcomes 1-9, whereas intuitively one might want to discount their weight. 
That is, generally speaking it is unclear that the GATE measures a particularly sensible parameter.
Given that defining an alternative GATE is a potentially contentious issue, and that as discussed treatment effects can be highly heterogeneous across outcomes, by default we recommend reporting inference for each treatment-outcome combination separately, and to only consider the GATE in circumstances where it is clear from the application that it is a sensible parameter.

\subsection{Alternative measures of association}
\label{ssec:alternative_measures}

Suppose that, rather than reporting regression coefficients $\beta$, a researcher wants to consider another treatment-outcome association measure $\rho$.
For example, if both $y$ and $x$ are continuous variables, one may want to report partial correlations.
Our framework can easily accommodate any measure $\rho$ that can be described as a function of model parameters $\theta$, i.e. $\rho= g(\theta)$. The BMA posterior distribution $p(\theta \mid y)$ implies a BMA posterior $p(\rho \mid y)$, i.e. one may immediately report BSCA inference on $\rho$.

As an illustration, suppose that $(y,x,z)$ follow a continuous distribution and, for simplicity, that there are no moderators.
The partial correlation between the outcome $y$ and treatment $x_j$, given the controls $z$ and all other treatments $x_{-j} = (x_1, \ldots, x_{j-1}, x_{j+1}, \ldots, x_p)$, is defined as
$$ \rho_j= cor(y, x_j | x_{-j}, z) = -\beta_j \sqrt{\frac{var(x_j | y, x_{-j}, z)}{var(y|x, z)}},$$
where $\beta_j$ is the linear regression coefficient in \eqref{eq:model_linear} (setting the link function $F$ to the identity).
Let $\mbox{var}(y \mid x, z)= \phi I$ where $\phi >0$ is the error variance, assumed equal for all $i=1,\ldots,n$. 
BMA gives a posterior distribution $p(\beta, \phi \mid y)$, and hence also on $\rho_j$, provided one has an estimate of $\mbox{var}(x_j \mid y, x_{-j},z)$. For example, in our illustrations in Section S2.6 we estimated the latter with the residual variance from a linear regression of $x_j$ on $(y,x_{-j},z)$.
An alternative is to postulate a Bayesian regression model for $x_j$ given $y, x_{-j},z$, which would provide a full BMA posterior on $(\beta,\phi,\mbox{var}(x_j \mid y,x_{-j},z))$, and hence on $\rho$. The latter strategy is however computationally more burdensome (one must do a model-fitting for each treatment), and is hence not considered here.

\section{Simulation study} \label{sec:simulation}

We illustrate the statistical properties of SCA and BSCA via several simulation studies. 
Section \ref{ssec:sim_singletreat} considers two settings with $J=1$ treatments and $L=1$ outcomes, shows issues with inflated type I error, bias and root mean squared error (RMSE) for SCA, and how BMA solves these issues.
Section \ref{ssec:sim_multitreat} considers two settings with $J=6$ treatments and $L=1$ outcomes, and Section \ref{ssec:sim_multioutcome} with $J=5$ treatments and $L=4$ outcomes. These serve to illustrate that, as predicted by the theory discussed in Section \ref{sec:methods}, despite conducting multiple tests our BMA implementation provides a good control of the type I error.
In these three sections we simulate both the case where there truly is no treatment effect and there truly is an effect.
The BSCA test is based on the posterior inclusion probability $P(\beta_j \neq 0 \mid y) > 0.95$, as described in Section \ref{sec:methods_bma}.
The SCA hypothesis test is based on the bootstrap-based P-value proposed by \citet{simonsohn_specification_2020}, using a 0.05 significance level.
For comparison we also considered a permutation-based P-value, which gave very similar results and is not reported here.

In all settings we used $n=1,000$ observations, a single control $z_i \in \mathbb{R}$ that has an effect on the outcome and is also correlated with the $J$ treatments, and we report average results across 100 independent simulations.
Specifically, the data-generating process is
\begin{align*}
  y_i & = \beta^T x_i + z_i + \epsilon_i, \\
  z_i & \sim N\left( \sum_{j=1}^J x_{ij} /J, 1 \right) \\
  x_i & \sim N(0, I)
\end{align*}
independently across $i=1,\ldots,n$, where $\beta \in \mathbb{R}^J$ are the true treatment effects and $\epsilon_i \sim N(0,1)$. In Section \ref{ssec:sim_multioutcome} where $y_i \in \mathbb{R}^4$ is a multivariate outcome, we used correlated errors $\epsilon \sim N(0, \Sigma)$. Specifically, $\Sigma$ has unit diagonal, pairwise correlations equal to 0.9 among outcomes 1-3, and 0.1 correlation with outcome 4. This correlation structure is meant to represent a situation where the first three outcomes can be thought of as measuring one common latent characteristic, that is different from that measured by the fourth outcome.

We chose a relatively large $n=1,000$ because the teenager studies also had large $n$ (actually larger). In such settings the data-generating model typically receives a high posterior probability, and in fact in the different analysis of the teenager data there was always a model with high posterior probability.
The simulations show that in such settings one attains high power and low type I error, in fact in all the examples below BSCA did not incur any type I nor any type II error, whereas we show that for SCA these errors had a probability near 1.
The reason is that the large $n$ does not improve the properties of the SCA median, which is an asymptotically biased estimator of the data-generating $\beta$.
We provide our R code so that readers can assess performance in other settings.

\begin{table}[bt]
\centering
\begin{tabular}{rllrr}
  \hline
 & Scenario & Estimator & Bias & RMSE \\ 
  \hline
1 &  & BMA & -0.001 & 0.010 \\ 
  2 & \multirow{-2}*{$\beta = (0)$} & SCA & 0.498 & 0.499 \\ 
  3 &  & BMA & -0.003 & 0.044 \\ 
  4 & \multirow{-2}*{$\beta = (1)$} & SCA & 0.497 & 0.498 \\ 
  5 &  & BMA & 0.000 & 0.002 \\ 
  6 & \multirow{-2}*{$\beta = (0, 0, 0, 0, 0, 0)$} & SCA & 0.075 & 0.077 \\ 
  7 &  & BMA & -0.003 & 0.012 \\ 
  8 & \multirow{-2}*{$\beta = (0, 0, 0.25, 0.75, 1, 1)$} & SCA & 0.055 & 0.072 \\ 
  9 & GATE = 0 & BMA & 0.000 & 0.002 \\ 
  10 & GATE $\neq$ 0 & BMA & -0.001 & 0.009 \\ 
   \hline
\end{tabular}
\caption{Bias and root mean squared error in simulation scenarios 1-5. Scenarios 1-2 have a single treatment, Scenarios 3-4 have 6 treatments, and Scenarios 5-6 have 4 treatments and 5 outcomes. $\beta$ indicates the data-generating truth.} 
\label{tab:sim_summary}
\end{table}

\subsection{Single treatment} \label{ssec:sim_singletreat}

\paragraph{Truly zero treatment effect}

We start with a single treatment. In our first scenario the treatment has a truly zero effect, and there is a single control with a non-zero effect. This scenario assesses the type I error, that is the frequentist probability that BMA would wrongly claim the treatment effect to exist. Rows 1 and 2 of Table \ref{tab:sim_summary} show the estimated bias and RMSE for the BMA and SCA median estimators, respectively. The BMA estimate has a bias close to zero, in contrast SCA tends to over-estimate the true parameter value $\beta_1=0$. The reason is that the treatment $x$ is correlated with the control $z$, which truly has an effect, hence the model including $x$ but not $z$ over-estimates $\beta_1$ (this follows from simple algebra and standard least-squares theory).

Regarding the type I error, recall that the BMA test rejects the null hypothesis $\beta_1=0$ when the posterior probability $P(\beta_1 \neq 0 \mid y) > 0.95$. In all simulations $P(\beta \neq 0 \mid y)$ took a small value, hence the null hypothesis was never rejected and the estimated type I error is 0.
In contrast, the estimated type I error for SCA was 1. The reason is that SCA tests the median effect across all covariate specifications. In our case there are two such specifications, depending on whether one includes the control or (wrongly) excludes it from the analysis. Although the median across these two specifications is non-zero, the true treatment effect is zero.

\paragraph{Truly non-zero treatment effect}

We repeat the exercise with a non-zero treatment effect $\beta_1=1$. Rows 3-4 of Table \ref{tab:sim_summary} report the estimated bias and RMSE. Using BMA, the null hypothesis was rejected in all simulations, hence the estimated power is 1.

\subsection{Multiple treatments} \label{ssec:sim_multitreat}

\paragraph{Truly no treatment effect}

\begin{table}[bt]
\centering
\begin{tabular}{rrrrrrr}
  \hline
 & x1 & x2 & x3 & x4 & x5 & x6 \\ 
  \hline
Bias & -0.000 & 0.000 & 0.001 & 0.000 & -0.000 & 0.001 \\ 
  RMSE & 0.003 & 0.006 & 0.009 & 0.003 & 0.003 & 0.009 \\ 
  Type I error & 0.000 & 0.000 & 0.000 & 0.000 & 0.000 & 0.000 \\ 
   \hline
\end{tabular}
\caption{Individual coefficient results for BMA in simulation \#3 with coefficients $\beta = (0, 0, 0, 0, 0, 0)$} 
\label{tab:sim_3_indiv}
\end{table}

Moving on to multiple treatments, we first consider a setting where 6 treatments truly have a zero effect, so the ATE=0.
Table \ref{tab:sim_summary} shows the bias and RMSE for the ATE estimate, again BMA provides a significant reduction for both relative to SCA.
Table \ref{tab:sim_3_indiv} shows the bias and RMSE for individual treatments. BMA did not declare as significant any individual treatment in any of the simulated datasets, that is both the individual and family-wise type I error probabilities are estimated to be near-zero.  The individual treatment effects are not usually considered in SCA, and are hence not reported in Table \ref{tab:sim_3_indiv}.

\paragraph{Truly non-zero treatment effect}

Next, we consider a setting where 2 treatments truly have no effect ($\beta_1 = \beta_2 = 0$) and 4 treatments have heterogeneous effects ($\beta_3=0.25$, $\beta_4=0.75$, $\beta_4=1$, $\beta_5=1$).
We consider effects of different magnitudes, to illustrate the power to detect smaller versus larger effects.
Again, we include one control covariate that is correlated with the treatments and truly has an effect.
Note that the true ATE and median treatment effects are both 0.5. This is to facilitate comparison between BMA and SCA, since in our implementation BMA targets the ATE and SCA the median.

\begin{table}[bt]
\centering
\begin{tabular}{rrrrrrr}
  \hline
 & x1 & x2 & x3 & x4 & x5 & x6 \\ 
  \hline
Bias & -0.001 & -0.002 & -0.002 & -0.006 & -0.004 & -0.001 \\ 
  RMSE & 0.010 & 0.016 & 0.030 & 0.034 & 0.033 & 0.035 \\ 
  Proportion rejected & 0.000 & 0.010 & 1.000 & 1.000 & 1.000 & 1.000 \\ 
   \hline
\end{tabular}
\caption{Individual coefficient results for BSCA in simulation \#4 with coefficients $\beta = (0, 0, 0.25, 0.75, 1, 1)$} 
\label{tab:sim_4_indiv}
\end{table}

Here, BMA correctly detected that the ATE$\neq 0$ in all simulations. 
The bias and RMSE associated to the ATE are reported in rows 7-8 of Table \ref{tab:sim_summary}.  
In Table \ref{tab:sim_4_indiv} we report the bias, RMSE and type I error probabilities for individual treatments. BMA correctly detected that $\beta_1=\beta_2=0$, and that $\beta_3 \neq 0$, $\beta_4 \neq 0$, $\beta_5 \neq 0$ and $\beta_6 \neq 0$ in all simulations.

\subsection{Multiple outcomes} \label{ssec:sim_multioutcome}

As discussed, although we generally recommend running BSCA for each outcome individually and reporting the whole heterogeneity across outcomes, in some situations one may also be interested in a global ATE across all $L$ outcomes and $J$ treatments.

Our first simulation considers a setting where none of the treatments truly has an effect, i.e. $\beta_{jl} = 0$ for all $j=1,\ldots,4$ and $l=1,\ldots,5$, so that the $\mbox{ATE}=0$.
The estimated type I error rate is zero for BMA.
The bias and RMSE to estimate the $\mbox{GATE}$ are reported in row 9 of Table \ref{tab:sim_summary}.

Next we consider a case where 3 treatments truly have an effect, whereas treatments 4-5 do not. The effect of treatments 1-3 on outcomes 1-3 is different to that on outcome 4, mimicking a situation where outcomes 1-3 measure a common latent characteristic. 
Specifically, we used
$$
\beta=
\begin{pmatrix}
1.00 & 1.00 & 1.00 & 0.25 \\ 
1.00 & 1.00 & 1.00 & 0.25 \\ 
1.00 & 1.00 & 1.00 & 0.25 \\ 
0.00 & 0.00 & 0.00 & 0.00 \\ 
0.00 & 0.00 & 0.00 & 0.00 \\ 
\end{pmatrix}
$$
In all simulations the posterior inclusion probability was $>0.95$, i.e. the estimated power for the Bayesian test was 1.

\section{Teenager well-being and technology} \label{sec:application}

For our application of BSCA to the debate on teenager well-being and technology, we use two datasets: the Youth Risk Behavior Survey (YRBS) and Millennium Cohort Study (MCS) data. The YRBS is a biennial survey of adolescent health risk using a representative sample of US secondary school students. We use data from 2007-2015 ($n=75083$). The MCS is a socio-economic and health survey tracking a representative sample of children born in the UK around year 2000 throughout their life. We use data from the 2015 survey ($n=11884$).
Section \ref{ssec:data} describes the data.

Section \ref{ssec:teen_multitreat} applies BSCA to the YRBS data. We find that TV and electronic device usage have opposite associations (negative and positive, respectively) with thinking about suicide. Averaging over these treatments, as done by the SCA median, results in a near-zero association. 
Section \ref{ssec:teen_multioutcome} applies BSCA to the MCS data, which shows that social media usage has opposite associations with parent-assessed and self-assessed well-being (positive and negative, respectively). Averaging over these outcomes again results in a near-zero association, despite there being sizable heterogeneous effects. 
Finally, Section \ref{ssec:teen_subgroups} considers a gender-based subgroup analyses, which reveals important differences between girls and boys in the association between social media usage and well-being.

\subsection{Data and model} \label{ssec:data}

The data were prepared as described by \citet{orben_association_2019}. 
We adjusted the definition of some of the variables by expressing them on a common scale to facilitate interpretation of the model parameters, using validated psychometric scales as opposed to (unvalidated) individual outcome variables, fixing minor errors related to variable codes and including unemployed parents in the analysis.
Whenever the outcome is a binary variable, we use a logistic regression model. 
See Section S4 for further details.

In the YRBS data we focus the discussion on the binary outcome indicating whether the subject reported having thought about suicide. Additional outcomes considered in our analysis are: felt lonely, planned suicide, attempted to commit suicide, and saw a doctor due to an attempted suicide (all self-reported).
The treatment variables are time spent watching TV as well as time spent using electronic devices – such as video consoles, computers, smartphones and tablets – for things other than school work.
In all YRBS regressions, we control for the following control covariates recorded in the study: race, sex, grade, year of the survey and body-mass index (BMI).
We discretized TV usage into low, medium and high, since our pre-analysis revealed a non-monotonic association. All treatment variables had monotonic associations with the outcome variables. The code for these data pre-processing results is available in our repository at \url{https://doi.org/10.17605/OSF.IO/M8D4N}.

The MCS data is more extensive. Here we look at the outcomes self-assessed depressive symptoms (Mood and Feelings Questionnaire) and self-esteem (Rosenberg scale), as well as parent-assessed total difficulties, emotional problems, conduct problems, hyperactivity/inattention, peer problems and pro-sociality (all Strengths and Difficulties Questionnaire).
To facilitate the comparison of the treatment effects relative to the YRBS data, 
we defined a binary version of these outcomes using validated cutoffs for abnormal behavior (Section S4.2). For completeness, Section S3.5 repeats the analysis using a linear model for the original (non-binary) outcomes, which reproduces all our main findings.
The treatment variables are time spent watching TV, playing video games, using the internet and using social media, as well as owning a PC.
In all MCS regressions, we control for sex, age, BMI, (self-reported) educational motivation, mother's ethnicity, (self-reported) closeness to parents, presence of natural father, time spent with primary caretaker (PC), PC's word activity score, PC's employment status, own longstanding illness, PC's psychological distress, number of siblings and household income.

For both datasets, the treatment variables (except TV usage, see above) were treated as continuous variables, normalized such that 0 means “no usage” and 1 means maximum reportable usage. The most usage respondent’s can report is 5 or more hours in the YRBS dataset and 7 or more hours in the MCS dataset. The coefficient $\beta_j$ from logistic regression models therefore gives the log-odds ratio of maximum usage relative to 0 hours. We report the corresponding odds ratio (i.e.\ $e^{\beta_j}$).

\subsection{Analyzing multiple treatments} \label{ssec:teen_multitreat}

The \emph{single-outcome BSCA} in Figure \ref{fig:bsca_single_yrbs} shows estimated associations between different technologies and the probability that a teenager thinks about suicide. TV and electronic device usage had \emph{opposite} associations with suicidal thoughts. Using BMA, higher TV usage is \emph{not} associated with higher odds relative to low TV use (moderate use actually has lower odds), whereas $\geq$5 hours of electronic device use is associated with 1.88 (95\% CI: 1.76--2.00) times \emph{higher} odds of thinking about suicide (compared to no usage).
Their sign and magnitude are the same for the outcomes feeling lonely, planning suicide, attempting suicide and seeing a doctor about suicide (Figure S1).

This example illustrates a problematic issue with the original SCA.
Averaging over technologies, which in this study are TV and electronic device use, gives an ATE that may appear practically irrelevant. Such an ATE likely misled \cite{orben_social_2019} to their conclusion that technology has no relevant association with well-being, whereas we argue that 1.88 higher odds of thinking about suicide are definitely practically relevant.
In fact, in more recent work the same authors studied the effect of individual treatments \citep{orben_social_2019}. 

This example also illustrates that model scores (posterior probabilities) were highly informative about which controls are important. In Figure \ref{fig:bsca_single_yrbs} all top models include gender and school grade, neither of which were considered by \citet{orben_association_2019}.
These controls are quantitatively important. Being a boy is associated with 0.45 (95\% CI: 0.43-0.47) lower odds of thinking about suicide.
Moreover, students in early high school years have worse predicted mental health. Among the school years in the sample (grades 9 through 12), moving up one grade is associated with 0.92 (95\% CI: 0.86-0.94) lower odds of thinking about suicide.
BMA allows to perform such inference on the coefficients of the controls, which can reveal interesting information beyond the treatment effects of primary interest.

\subsection{Analyzing multiple outcomes} \label{ssec:teen_multioutcome}

\begin{figure}[t]
  \includegraphics[width=\linewidth]{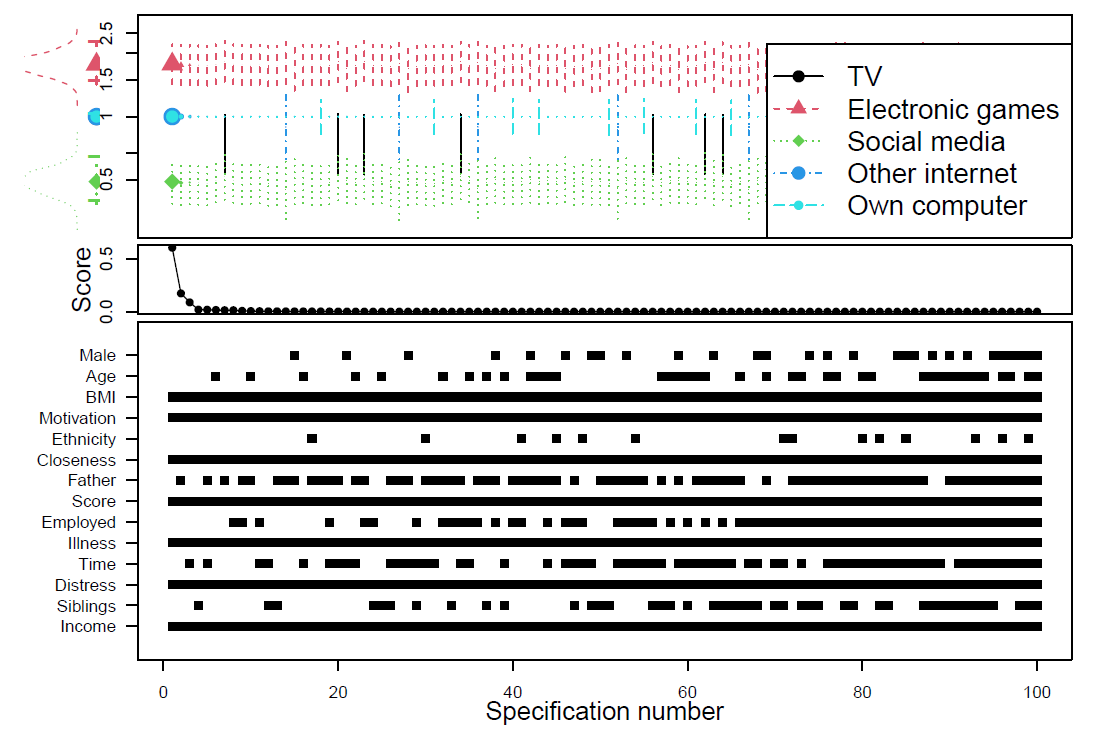}
  \caption{Single-outcome BSCA. Top-right: estimated effect of different technology uses on the probability of the outcome high total problems (parent-assessed, Strengths and Difficulties Questionnaire score $\geq$ 14) with 95\% interval for different treatments and models, marker size proportional to model score. Middle: EBIC-based model scores. Bottom: included controls. Top-left: Bayesian Model Averaging point estimate, 95\% interval and posterior distribution. Effects are odds-ratios for >7 hours compared to no usage. Data source: MCS.}
  \label{fig:bsca_single_mcs}
\end{figure}

When considering multiple outcomes, it is important to explore if treatment effects are heterogeneous across outcomes. 
In particular, if said effects are of a different sign, then focusing inference on GATEs may be severely misleading.
Such an issue occurs in the teenager data.
Figure \ref{fig:bsca_single_mcs} reveals that using social media for $\geq$7 hours is associated with \emph{lower} odds of parent-assessed total difficulties, compared to no usage (odds-ratio 0.49, 95\% CI: 0.38-0.62). This finding is in contrast with the negative effect of social media on mental well-being found in experiments \citep{allcott_welfare_2020}. A deeper analysis reveals that social media is actually associated with more \emph{adolescent-assessed} difficulties.

\begin{figure}[t]
  \includegraphics[width=\linewidth]{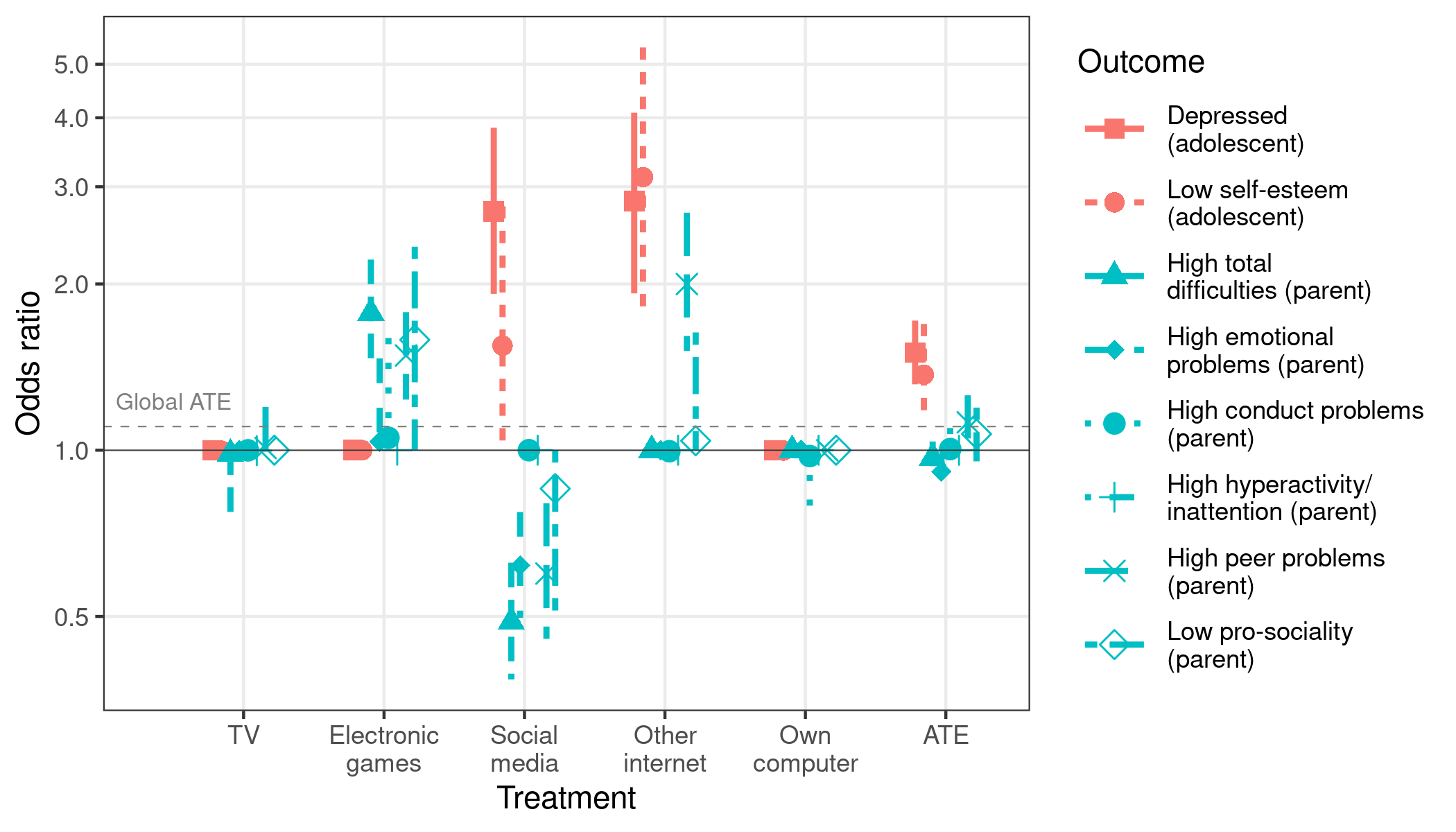}
  \caption{Multiple-outcome BSCA. BMA estimates and 95\% intervals for odds ratios between no use and \textgreater=7 hours of usage of different technologies and the average treatment effect (ATE) on several measures of mental well-being. The dashed horizontal line indicates the global ATE across all outcomes. Outcome variables: self-reported depression (Mood and Feelings Questionnaire short version score $\geq$ 12) and low self-esteem (Rosenberg short version score $\leq$ 7); parent-reported high total problems, high emotional problems, high conduct problems, high hyperactivity/inattention, high peer problems and low pro-sociality (Strengths and Difficulties Questionnaire score $\geq$ 14, $\geq$ 4, $\geq$3, $\geq$ 6, $\geq$ 3 and $\leq$ 5). Controls: age, sex, BMI, educational motivation, mother's ethnicity, closeness to parents, presence of natural father, time spent with primary caretaker (PC), PC's word activity score, PC's employment status, longstanding illness, PC's psychological distress, number of siblings and household income. Data source: MCS.}
  \label{fig:bsca_multiple}
\end{figure}

To identify these issues, the \emph{multiple-outcome BSCA} in Figure \ref{fig:bsca_multiple} summarizes all single-outcome analyses into one display. This plot jointly reports all analyses, and hence avoids selective reporting. Figure \ref{fig:bsca_multiple} reveals that social media use is associated with \emph{lower} odds of total and emotional problems according to parents, but 2.72 (95\% CI: 1.92--3.85) and 1.52 (95\% CI: 1.00-2.75) times \emph{higher} odds of depressive symptoms and low self-esteem, respectively, according to adolescents.
That is, while parents assessed less difficulties for teenagers who use social media heavily, the assessment of the teenagers themselves had an opposite sign.

Figure \ref{fig:bsca_multiple} also includes the ATE for each outcome, illustrating that it is near-zero for outcomes with opposing individual treatment effects, such as parent-assessed total difficulties (which has a positive association with electronic games, and a negative one with social media). Finally, Figure \ref{fig:bsca_multiple} shows a global ATE across outcomes, but we caution against its use for formal inference. Besides potentially averaging over inherently different outcomes, the global ATE may over-weight blocks of highly correlated outcomes (see Section \ref{sec:methods_multiple}).

\subsection{Analysing subgroups} \label{ssec:teen_subgroups}

\begin{figure}[t]
  \includegraphics[width=\linewidth]{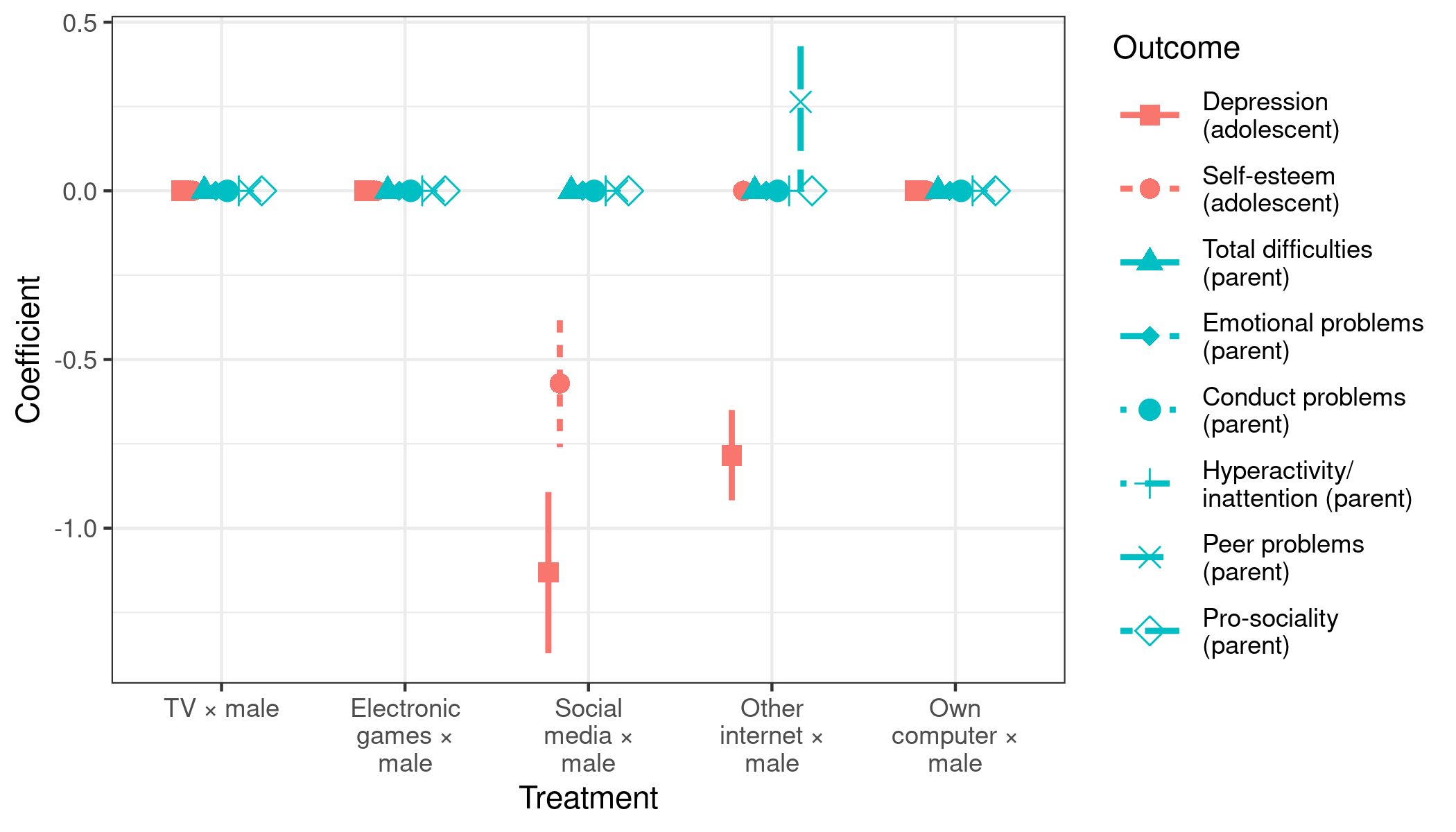}
  \caption{Multiple-outcome BSCA for gender subgroups. BMA estimates and 95\% intervals for treatment-gender interaction terms from a linear regression model measuring the ATE in females - the ATE in males. Outcomes are the raw scores of the variables in Figure \ref{fig:bsca_multiple}, normalized to 1-10. Treatments are scaled to lie in [0,1], where 0 means no use and 1 means >=7 hours of usage. Control covariates are the same as in Figure \ref{fig:bsca_multiple}. Data source: MCS}
  \label{fig:bsca_subgroup}
\end{figure}

There is some evidence in the literature on social media usage and mental health that there may be relevant interactions with gender: girls with high exposure to social media are more likely to be depressed than boys with the same amount of exposure \citep{kelly_social_2018,twenge_why_2020}.
To study this issue, we perform an analysis where we consider two sub-populations defined by gender.
We consider the 5 treatments in the MCS data and the 8 outcomes measuring well-being considered in  Section \ref{ssec:teen_multioutcome},
treated as continuous variables given by the survey scores.
For completeness, we repeated the analyses after defining a binary version of all outcomes (as described in Section \ref{ssec:teen_multioutcome}). In this case the interaction vs. gender was not statistically significant, likely due to a loss of statistical power for binary-outcome relative to continuous-outcome regression. See Section S3.6 for further discussion.

The results are shown in Figure \ref{fig:bsca_subgroup}. High social media usage is indeed associated with an average -1.1 points lower depression and a -0.8 points lower self-esteem score for males compared to females (scale: 1-10). Other internet usage is also associated with a -0.6 points lower self-assessed depression score among males compared to females and, interestingly, 0.25 points higher association with parent-assessed peer problems. Again, there is a discrepancy between the assessments made by teenagers and their parents.
For all other outcomes and treatments, including the use of TV and electronic devices and owning a computer, BSCA did not find evidence for an interaction with gender. That is, a subgroup analysis is not justified for these treatment-outcome pairs.

\section{Discussion} \label{sec:discussion}

We introduce the Bayesian Specification Curve Analysis (BSCA) as a tool which, besides visualizing the sensitivity of results to the chosen statistical model, provides a strategy to identify the most promising models and aggregate over control configurations, while allowing one to detect heterogeneity across treatments, outcomes and subpopulations. BSCA can either be used to assess treatment effects on a single outcome, or to summarize the association with multiple outcomes.
We adopted a formulation based on the Extended Bayesian Information Criterion (EBIC) that is simple and helps prevent false positives. 
It also helps bypass a common critique of Bayesian statistics that the specification of priors could be seen as arbitrary: our posterior model probabilities are a monotone transformation of the EBIC, a popular criterion in the non-Bayesian community.

Using BSCA, we find that technology use has an association with teenager well-being of high statistical and practical significance. Some technologies are associated with suicidal thoughts, self-reported depression and low self-esteem. We also find that social media has opposite associations with parent and adolescent assessments. 
This heterogeneity is masked when taking the median over all estimated effects. By displaying such heterogeneous treatment effects, BSCA helps avoid potentially misleading conclusions.
Some may disagree with portraying such heterogeneity, arguing that estimates for different outcomes compatible with the researcher’s theory should be always be aggregated. 
We showed that, if so desired, BSCA allows to report and test such aggregated estimates (global ATE).
However, we argue that one should always assess heterogeneity. If there is clear heterogeneity – that, in the extreme case seen in our application, cancels out in the aggregate – practitioners may need to revise their theory or at least report the underlying heterogeneity. By providing tests for individual treatment effects, BSCA accommodates this requirement.

Importantly, data-based aggregation methods like SCA and BSCA cannot replace theory to select outcomes and covariates. Theory is essential to generate well-specified models and predictions \citep{muthukrishna_problem_2019}. Instead, these methods are useful when a given theory is ambiguous about which variables should be used. If the researcher’s theory requires that some controls or sub-populations should definitely be included, that can easily be done in BMA by setting their prior inclusion probabilities to 1. That is, Bayesian methods offer a strategy to combine theory with data-based evidence. Like the the selection of the general model, this should always be reported, justified, and pre-registered if possible.

It is important to remark that our findings show the existence of a conditional association between technology and teenager well-being, and in particular cannot be used to establish a causal connection between them. 
We do note, however, that our findings are fairly compatible with existing literature and available causal evidence.
Previous studies have found different associations (positive, negative and null) between technology use and adolescent mental well-being and a small negative association in the aggregate \citep{stiglic_effects_2019,orben_teenagers_2020,twenge_why_2020}. This is compatible with our finding that different kinds of technology use have opposite associations that average to a small negative effect, and highlights our message that one should not blindly average over treatments and outcomes.
Moreover, observational evidence \citep{valkenburg_social_2022} and causal evidence from randomized controlled trials \citep{allcott_welfare_2020,allcott_digital_2022} and natural experiments \citep{braghieri_social_2022} establish a negative effect of social media on self-reported mental well-being, as in Figure \ref{fig:bsca_multiple}. Some of these studies also support that the effect is stronger for girls \citep{allcott_welfare_2020}, as in Figure \ref{fig:bsca_subgroup}.
Our finding that the association of self-assessed and parent-assessed well-being with technology has not been documented to the best of our knowledge. It could be driven by the somewhat different measures (e.g., psychosomatic complaints and behavioral difficulties) or by the disagreement between self and parent reports \citep{roberts_concordance_2005,chen_parent-child_2017,poulain_parent-child_2020}. The latter has been shown to lead to differential associations between well-being and atopic diseases \citep{keller_atopic_2021}. The finding deserves a more detailed analysis that falls outside the scope of this paper, which is primarily methodological.

In line with our recommendations for conducting a BSCA, we assess heterogeneity and robustness with respect to many analytic decisions flagged in the literature, including datasets, treatments, control covariates (Figures \ref{fig:bsca_single_yrbs} and \ref{fig:bsca_single_mcs}), outcomes (Figures \ref{fig:bsca_multiple} and S1), moderators (Figure \ref{fig:bsca_subgroup}), functional form and variable dichotomization (Section S3) as well as non-linear dose-response relationship (pre-analysis file). Due to data limitations, we cannot vary some other contentious choices, including the cross-sectional design, self-reported screen time as the treatment measure, and leaving out several important moderators (for example, we cannot check for for difference between active and passive users, private and public usage, or types of self-presentation).

The improvement of statistical properties of BSCA over the median effect reported by SCA is due to weighing models using their probability given the data. 
Taking an unweighted average is ``a degenerate form of Bayesian model averaging in which we never update our priors'' \citep{slez_difference_2019}. 
In Bayesian terms, the SCA median is a prior predictive model where all models are weighted equally. As a result, the estimator does not generally converge to the true effect as $n$ grows. By contrast, BSCA uses posterior weights informed by the data, i.e. takes into account the probability that the data where generated by any one model.
As discussed, our EBIC-based formulation guarantees that, as the sample size $n$ grows, under mild conditions the total number of false discoveries converges to zero. 

We remark, however, that these theoretical results need to interpreted carefully in practice, because models are simplifications of reality and hence are often misspecified. One may misspecify the structure for the mean, the error distribution, or the covariance structure. In treatment effect estimation misspecifying the mean structure is the most serious of these issues, in the sense that even as $n$ grows one fails to select the right covariates and recover the true treatment effect. For example, one may fail to record truly relevant control covariates, specify wrongly the functional form of their effect, or that of the treatments. The latter two issues can be addressed with exploratory data analysis and model-checking diagnostics, as we did in our analyses. However, if relevant controls were not recorded, one unavoidably runs into omitted variable biases. For this reason, one should keep in mind that the reported treatment effects do not establish causal relationships, but are a measure of conditional association. For example, we found that technology use is conditionally associated with several teenager well-being outcomes after accounting for age, gender and socio-economic covariates, but including further covariates (e.g. past depression history) might change these estimates.

Alternative strategies are of course possible. One could consider other Bayesian formulations that guarantee false positive control, e.g. the so-called complexity priors on the model space of \cite{castillo_2015} or the non-local priors on the regression coefficients \citep{johnson_2010,johnson_2012,rossell_2017}. 
It is also possible to refine our formulation for situations where one considers many outcomes. Briefly, the EBIC guarantees that the family-wise error rate converges to 0 for large $n$ for each individual outcome. Hence, for any fixed number of outcomes $L$, the total false positives across outcomes also converges to 0. If one considered a very large $L$, one could easily adapt the EBIC so that it is based on the total number of parameters across all outcomes, and therefore provide a false positive control.
While these refinements are potentially interesting, we found them to be unnecessary in the considered applications.
Finally, one could also use non-Bayesian false positive control methods. 
BMA could be viewed simply as a mechanism to obtain a point estimate for either an ATE or a single treatment effect $\hat{\beta}_{jl}= E(\beta_{jl} \mid y)$. Given such point estimates, one can use permutation tests to obtain P-values and standard P-value adjustment or False Discovery Rate control methods to control for multiple testing. See \citet{benjamini_1995,efron_2007} for discussion on FWER and FDR control. 
These methods are typically less stringent than the EBIC in that they target a non-zero probability of including some false positives, whereas for the EBIC said probability converges to 0 as $n$ grows.

\section*{Acknowledgements}

We thank Profs.\ Orben \& Przybylski for providing their code and helpful feedback. 
C.S. was supported by ''la Caixa'' Foundation (ID 100010434) grant LCF/BQ/DR19/11740006. D.R. was partially supported by Ramon y Cajal grant RYC-2015-18544, Spanish Government grant PGC2018-101643-B-I00, and Europa Excelencia grant EUR2020-112096.

\section*{Supplementary Materials}

The Supplementary Material 
includes an introduction to Bayesian regression, instructions on how to reproduce our results, robustness checks, and data treatment details.
The code to reproduce our empirical analysis and produce the Bayesian Specification Curve Analysis (BSCA) plots can be found in our Open Science Framework repository at \url{https://doi.org/10.17605/OSF.IO/M8D4N}.

\appendix

\section{Proof of Proposition \ref{thm:ate}}\label{sec:proofs}

First, note that the global ATE can be written as
$$
\mbox{GATE}= \sum_{l=1}^L \sum_{j=1}^J \frac{\beta_{jl}}{JL}= \frac{{\mathbb 1}_J^T}{J}  \beta \frac{{\mathbb 1}_L}{L} 
$$
where ${\mathbb 1}_J=(1,\ldots,1)^T$ is the $J$-dimensional one-vector, ${\mathbb 1}_L$ the $L$-dimensional one-vector and recall that $\beta$ is an $J \times L$ matrix. Since $m_i= \sum_{l=1}^L y_{il}/L= {\mathbb 1}_L^T y_i/L$, it follows that
$$
m_i = \frac{{\mathbb 1}_L^T}{L} \alpha + \frac{{\mathbb 1}_L^T}{L} \gamma z_{i} + \frac{{\mathbb 1}_L^T}{L} \eta g_{i} + \sum_{j=1}^J \left( \frac{{\mathbb 1}_L^T}{L} \beta_{j} x_{ij} + \frac{{\mathbb 1}_L^T}{L} \delta_{j} x_{ij} g_{i} \right) + \frac{{\mathbb 1}_L^T}{L} \epsilon_i 
= \tilde{\alpha} + \tilde{\gamma}^T z_{i} + \tilde{\eta}^T g_{i} + \sum_{j=1}^J \left( \tilde{\beta}_{j} x_{ij} + \tilde{\delta}_{j}^T x_{ij} g_{i} \right) + \xi_i
$$
where $\tilde{\alpha} = \frac{{\mathbb 1}_L^T}{L} \alpha= \sum_{l=1}^L \alpha_l/L$, 
$\tilde{\gamma}^T = \frac{{\mathbb 1}_L^T}{L} \gamma$, 
$\tilde{\eta}^T = \frac{{\mathbb 1}_L^T}{L} \eta$,
$\tilde{\beta}_j = \frac{{\mathbb 1}_L^T}{L} \beta_j$,
$\tilde{\delta}_j^T = \frac{{\mathbb 1}_L^T}{L} \delta_j$,
and $\xi_i = \frac{{\mathbb 1}_L^T}{L} \epsilon_i= \sum_{l=1}^L \epsilon_{il}/L$.
Hence,
$$
\mbox{GATE}= \frac{{\mathbb 1}_J^T}{J}  \beta \frac{{\mathbb 1}_L}{L}= \frac{{\mathbb 1}_J^T}{J} \tilde{\beta}= \sum_{j=1}^J \frac{\tilde{\beta}_j}{J},
$$
as we wished to prove.

Finally, note that $\tilde{\gamma} = \frac{1}{L} \gamma^T {\mathbb 1}_L$ is a $Q \times 1$ vector with $q^{th}$ entry given by $\sum_{l=1}^L \gamma_{lq}/L$,
$\tilde{\eta} = \frac{1}{L} \eta^T {\mathbb 1}_L$ a $K \times 1$ vector with $k^{th}$ entry given by $\sum_{l=1}^L \eta_{lk}/L$
and $\tilde{\delta}_j = \frac{1}{L} \delta_j^T {\mathbb 1}_L$ a $K \times 1$ vector with $k^{th}$ entry equal to $\sum_{l=1}^L \delta_{jlk}/L$.

\bibliographystyle{plainnat}
\bibliography{specurveanalysis}

\end{document}